\newcommand{\fvig}{face-vertex incidence graph\xspace}
\newcommand{\cfc}{connected face cover\xspace}
\newcommand{\ops}{\textsc{Outerplane Splitting Number}\xspace}
\newcommand{\VC}{\textsc{Vertex Cover}\xspace}
\newcommand{\CFC}{\textsc{Connected Face Cover}\xspace}
\renewcommand{\orcidID}[1]{\href{https://orcid.org/#1}{\includegraphics[scale=.03]{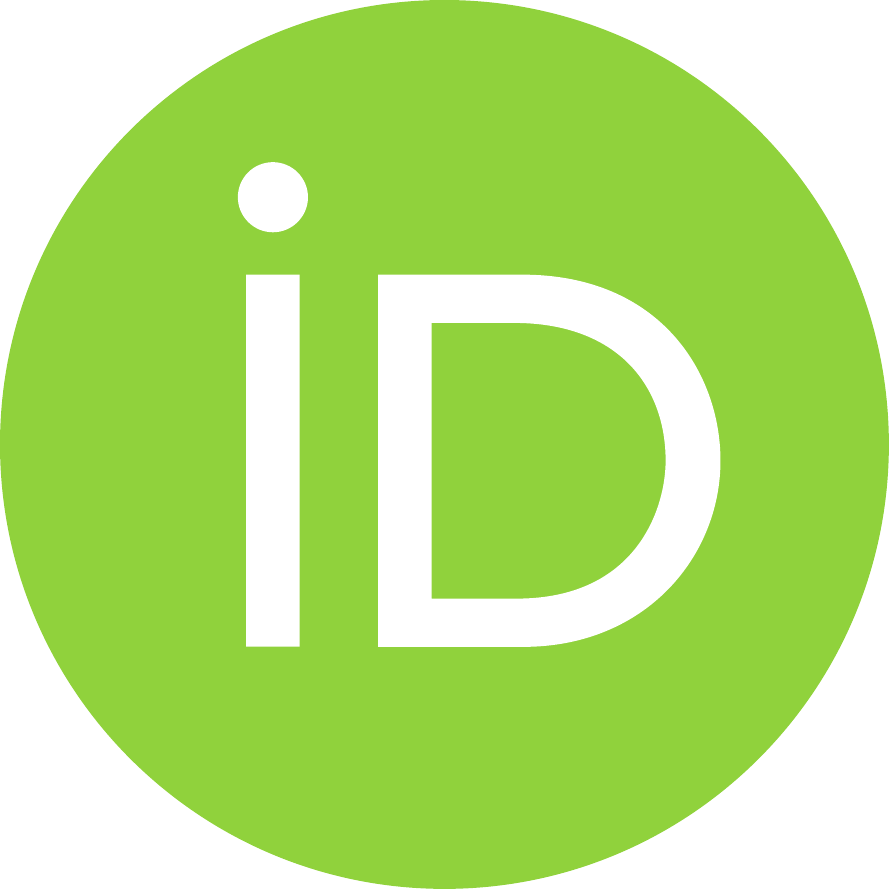}}}
\newcommand{\osn}{\ensuremath{\operatorname{osn}}}
\spnewtheorem{prp}{Property}{\bfseries}{\itshape}
\Crefname{prp}{Property}{Properties}
\spnewtheorem{observation}{Observation}{\bfseries}{\itshape}
\Crefname{observation}{Observation}{Observations}
\begin{document}
\title{Splitting Plane Graphs to Outerplanarity\thanks{Anaïs
Villedieu is supported by the Austrian Science Fund (FWF) under grant P31119.}}

\author{Martin Gronemann\orcidID{0000-1111-2222-3333} \and
	Martin Nöllenburg\orcidID{1111-2222-3333-4444} \and
	Anaïs Villedieu\orcidID{2222--3333-4444-5555}}

\institute{Algorithms and Complexity Group, TU Wien, Vienna, Austria
\email{\{mgronemann|noellenburg|avilledieu\}@ac.tuwien.ac.at}}

\maketitle


%
%

\begin{abstract}
Vertex splitting replaces a vertex by two copies and partitions its incident edges amongst the copies. 
This problem has been studied as a graph editing operation to achieve desired properties with as few splits as possible, most often planarity, for which the problem is \NP-hard.
Here we study 
how to minimize the number of splits to turn a plane graph into an outerplane one. We tackle this problem by establishing a direct connection between splitting a plane graph to outerplanarity, finding a connected face cover, and finding a feedback vertex set in its dual. We prove \NP-completeness for  plane biconnected graphs, while we show that a  polynomial-time algorithm exists for maximal planar graphs. Finally, we provide upper and lower bounds for certain families of maximal planar graphs.

\keywords{vertex splitting  \and outerplanarity \and feedback vertex set.}
\end{abstract}

\section{Introduction}

Graph editing problems are fundamental problems in graph theory. They define a set of basic operations on a graph $G$ and ask for the minimum number of these operations necessary in order to turn $G$ into a graph of a desired target graph class $\mathcal G$~\cite{nss-ccsemp-01,ly-nphpn-80,y-nenp-78,k-aog-96}. 
For instance, in the Cluster Editing problem~\cite{sst-cgmp-04} the operations are insertions or deletions of individual edges and the target graph class are cluster graphs, i.e., unions of vertex-disjoint cliques. 
In graph drawing, a particularly interesting graph class are planar graphs,  
for which several related graph editing problems have been studied, e.g., how many vertex deletions are needed to turn an arbitrary graph into a planar one~\cite{ms-opgvd-12} or how many vertex splits are needed to obtain a planar graph~\cite{jackson1984splitting,Faria_2001}. 
In this paper, we are interested in the latter operation: vertex splitting.
A \emph{vertex split} creates two copies of a vertex $v$, distributes its edges among these two copies and then deletes $v$ from~$G$. 

Further, we are translating the graph editing problem into a more geometric or topological drawing editing problem.
This means that we apply the splitting operations not to the vertices of an abstract graph, but to the vertices of a planar graph drawing, or more generally to a planar embedded (or \emph{plane}) graph. 
In a plane graph, each vertex has an induced cyclic order of incident edges, which needs to be respected by any vertex split in the sense that we must split its cyclic order into two contiguous intervals, one for each of the two copies. 
From a different perspective, the two faces that serve as the separators of these two edge intervals are actually merged into a single face by the vertex split.

Finally, we consider outerplanar graphs as the target graph class. 
Thus, we want to apply a minimum number of vertex splits to a plane graph $G$, which merge a minimum number of faces in order to obtain an outerplanar embedded graph $G'$, where all vertices are incident to a single face, called the \emph{outer face}. We denote this minimum number of splits as the \emph{outerplane splitting number} $\osn(G)$ of $G$ (see~\cref{fig:introfig}).
Outerplanar graphs are a prominent graph class in graph drawing (see, e.g., \cite{Biedl11,Frati22,LazardLL19,LenhartL96prox}) as well as in graph theory and graph algorithms more generally (e.g., \cite{BodlaenderF02,f-saicrt-96,mz-emaog-99}). 
For instance, outerplanar graphs admit planar circular layouts or 1-page book embeddings~\cite{bk-btg-79}. 
Additionally, outerplanar graphs often serve as a simpler subclass of planar graphs with good algorithmic properties. For instance, they have treewidth 2 and
their generalizations to $k$-outerplanar graphs still have bounded treewidth~\cite{Bodlaender98,Biedl_2015}, which allows for polynomial-time algorithms for \NP-complete problems that are tractable for such bounded-treewidth graphs. This, in turn, can be used to obtain a PTAS for these problems on planar graphs~\cite{Baker94}. 


We are now ready to define our main computational problem as follows.

\begin{problem}[\ops]
	Given a plane biconnected graph $G=(V,E)$ and an integer $k$, can we transform $G$ into an outerplane graph $G'$ by applying at most $k$ vertex splits to $G$?
\end{problem}

\begin{figure}[t]
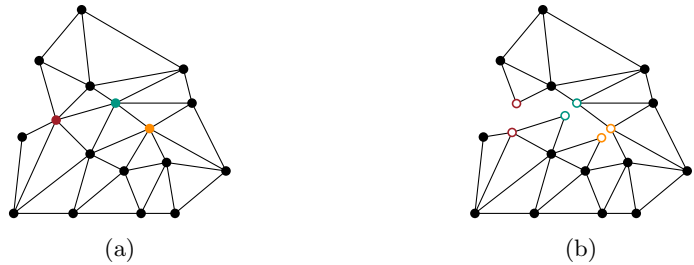

\centering
\begin{subfigure}[b]{.49\textwidth}
    \centering
    \includegraphics[page=3]{intro.pdf}
    \caption{}
    \label{fig:subfig3}
\end{subfigure}
\hfil
\begin{subfigure}[b]{.49\textwidth}
    \centering
    \includegraphics[page=4]{intro.pdf}
    \caption{}
    \label{fig:subfig3}
\end{subfigure}
\caption{(a)~An instance of \ops, where the colored vertices will be split; (b) resulting outerplane graph after the minimum 3 splits.} 
\label{fig:introfig}
\end{figure}

\paragraph{Contributions.}
In this paper, we introduce the above  problem \textsc{Outerplane Splitting Number. We start by showing} the key property for our subsequent results, namely that (minimum) sets of vertex splits to turn a  plane biconnected graph $G$ into an outerplane one correspond to (minimum) connected face covers in $G$ (\cref{sec:fvig}), which in turn are equivalent to (minimum) feedback vertex sets in the dual graph of $G$.
Using this tool we then show that for general  plane biconnected graphs \ops is \NP-complete (\cref{sec:hardness}), whereas for maximal planar graphs we can solve it in polynomial time (\cref{sec:fvs}). 
Finally, we provide upper and lower bounds on the outerplane splitting number for maximal planar graphs (\cref{sec:bounds}). 

\paragraph{Related Work.}
Splitting numbers have been studied mostly for abstract (non-planar) graphs with the goal of turning them into planar graphs. 
The \textsc{Planar Splitting Number} problem is \NP-complete in general~\cite{Faria_2001}, but exact splitting numbers are known for complete and complete bipartite graphs~\cite{HartsfieldJR85,jackson1984splitting}, as well as for the 4-cube~\cite{FariaFN98}. For two-layer drawings of general bipartite graphs, the problem is still \NP-complete, but \FPT~\cite{splittingbipfpt} when parametrized by the number of split vertices. 
It has also been studied for other surfaces such as the torus~\cite{Hartsfield86} and the projective plane~\cite{Hartsfield87}. 
Another related concept is the split thickness of a graph $G$ (or its folded covering number~\cite{ku-twcg-16}), which is the smallest $k$ such that $G$ can be transformed into a planar graph by applying at most $k$ splits per vertex. 
Recognizing graphs with split thickness $2$ is \NP-hard, but there is a constant-factor approximation algorithm and a fixed-parameter algorithm for graphs of bounded treewidth~\cite{Eppstein_2017}.
Recently, the complexity of the embedded splitting number problem of transforming non-planar graph drawings into plane ones has been investigated~\cite{splittingnoellenburg}.
Beyond the theoretical investigations of splitting numbers and planarity, there are also applied work in graph drawing making use of vertex splitting to untangle edges~\cite{Wu_2020_TVCG} or to improve layout quality for community exploration~\cite{admpt-hgvwcaea-22,HenryBF08}.

Regarding vertex splitting for achieving graph properties other than planarity, Trotter and Harary~\cite{th-dmig-79} studied vertex splitting to turn a graph into an interval graph. 
Paik et al.~\cite{PaikRS98} considered vertex splitting to remove long paths in directed acyclic graphs and
Abu-Khzam et al.~\cite{Abu-KhzamBFS21} studied heuristics using vertex splitting for a cluster editing problem.




\begin{figure}[t]
\centering
\begin{subfigure}[b]{.24\textwidth}
    \centering
    \includegraphics[page=1]{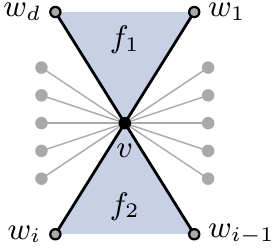}
    \caption{}
    \label{fig:split_intro_1}
\end{subfigure}
\hfil
\begin{subfigure}[b]{.24\textwidth}
    \centering
    \includegraphics[page=2]{split_intro}
    \caption{}
    \label{fig:split_intro_2}
\end{subfigure}
\begin{subfigure}[b]{.24\textwidth}
    \centering
    \includegraphics[page=3]{split_intro}
    \caption{}
    \label{fig:split_intro_3}
\end{subfigure}
\begin{subfigure}[b]{.24\textwidth}
    \centering
    \includegraphics[page=4]{split_intro}
    \caption{}
    \label{fig:split_intro_4}
\end{subfigure}
\caption{(a)~Two touching faces $f_1, f_2$ with a common vertex $v$ on their boundary. (b)~Result of the split of $v$ with respect to $f_1, f_2$ joining them into a new face $f$. (c-d)~Merging 4 faces $f_1, \ldots, f_4$ covering a single vertex $v$ with 3 splits.}
	\label{fig:split_intro}
\end{figure}

\paragraph{Preliminaries.} The key concept of our approach is to merge a set of faces of a given plane graph $G=(V,E)$ with vertex set $V=V(G)$ and edge set $E=E(G)$ into one big face which is incident to all vertices of $G$. Hence, the result is outerplanar.
The idea is that if two faces $f_1$ and $f_2$ share a vertex $v$ on their boundary (we say $f_1$ and $f_2$ \emph{touch}, see \cref{fig:split_intro_1}), then we can split $v$ into two new vertices $v_1, v_2$. In this way, we are able to create a narrow gap, which merges $f_1, f_2$ into a bigger face $f$ (see \cref{fig:split_intro_2}). With this in mind, we formally define an \emph{embedding-preserving split} of a vertex $v$ w.r.t. two incident faces $f_1$ and $f_2$. We construct a new plane graph $G'=(V', E')$ with $V' = V \setminus \{v\} \cup \{v_1,v_2\}$. Consider the two neighbors of $v$ both incident to $f_1$ and 
let $w_1$ be the second neighbor in clockwise order. 
Similarly, let $w_i$ be the second vertex adjacent to $v$ and incident to $f_2$. We call $w_d$ the vertex preceding $w_1$ in the cyclic ordering or the neighbors, with $d$ being the degree of $v$, see~\cref{fig:split_intro_1}. Note that while $w_1=w_{i-1}$ and $w_i=w_d$ is possible, $w_d\ne w_1$ and $w_{i-1}\ne w_i$.
For the set of edges, we now set $E'= E \setminus \{ (v,w_1), \ldots, (v,w_d) \} \cup \{ (v_2, w_1), \ldots, (v_2, w_{i-1}) \} \cup \{ (v_1, w_i), \ldots, (v_1, w_{d}) \}$ and assume that they inherit their embedding from $G$.
From now on we refer to this operation simply as a \emph{split} or when $f_1,f_2$ are clear from the context, we may refer to \emph{merging} the two faces at $v$. 
The vertices $v_1,v_2$ introduced in place of $v$ are called \emph{copies} of $v$. 
If a copy $v_i$ of a vertex $v$ is split again, then any copy of $v_i$ is also called a copy of the original vertex $v$. 

We can now reformulate the task of using as few splits as possible. 
Our objective is to find a set of faces $S$ that satisfies two conditions.
(1) Every vertex in $G$ has to be on the boundary of at least one face $f \in S$, that is, the faces in $S$ \emph{cover} all vertices in $V$.\footnote{Testing whether such $S$ with $|S| \le k$ exists, is the \NP-complete  problem \textsc{Face Cover}~\cite{bm-ccvfpg-88}.}
And (2) for every two faces $f,f' \in S$ there exists a set of faces $\{f_1,\dots,f_k\}\subseteq S$ 
such that $f=f_1, \ldots, f_k = f'$, 
and $f_i$ touches $f_{i+1}$ for $1\leq i < k$. 
In other words, $S$ is connected in terms of touching faces. We now introduce the main tool in our constructions that formalizes this concept. 

\section{Face-Vertex Incidence Graph}\label{sec:fvig}
Let $G=(V,E)$ be a plane biconnected graph and $F$ its set of faces. The \emph{\fvig} is defined as $H = (V \cup F, E_H)$ and contains the edges $E_H = \{ (v,f) \in V \times F : v \text{ is on the boundary of } f \}$. 
Graph $H$ is by construction bipartite and we  
assume that it is plane by placing each vertex $f \in F$ into its corresponding face in~$G$. 


\begin{definition}\label{def:cfc} 
Let $G$ be a plane biconnected graph, let $F$ be the set of faces of $G$, and let $H$ be its \fvig. 
A \emph{face cover} of $G$ is a set $S \subseteq F$ of faces such that every vertex $v\in V$ is incident to at least one face in $S$.
A face cover $S$ of 
$G$ is a \emph{connected} face cover if the induced subgraph $H[S \cup V]$ of $S \cup V$ in $H$ is connected.
\end{definition}
\noindent 
We point out that the problem of finding a \cfc is not equivalent to the Connected Face Hitting Set Problem \cite{SchweitzerS10}, where a connected set of vertices incident to every face is computed. 
\noindent We continue with two lemmas that are concerned with 
merging multiple faces at the same vertex (\cref{fig:split_intro_3}). 

\begin{lemma}\label{lem:single-vertex-split}
Let $G$ be a plane biconnected graph and $S \subseteq F$ a subset of the faces $F$ of $G$ that all have the vertex $v \in V$ on their boundary. Then $|S|-1$ splits are sufficient to merge the faces of $S$ into one.
\end{lemma}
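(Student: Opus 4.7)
The plan is to proceed by induction on $|S|$. The base case $|S|=1$ is immediate since no splits are required. For the inductive step with $k = |S| \geq 2$, I would choose two faces $f_1, f_2 \in S$ that are \emph{consecutive} among the faces of $S$ in the cyclic face ordering around $v$; that is, in one of the two circular arcs delimited by $f_1$ and $f_2$ in that ordering, no other face of $S$ appears. A single embedding-preserving split of $v$ with respect to $f_1$ and $f_2$ then merges them into a single face $f$ and replaces $v$ by two copies $v_1, v_2$, both of which lie on the boundary of $f$.

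Next I would invoke the explicit construction of the split from the preliminaries: the remaining faces incident to $v$ are partitioned between $v_1$ and $v_2$ exactly according to which of the two arcs delimited by $f_1, f_2$ they occupy. Since $f_1$ and $f_2$ were chosen consecutive in $S$, every face of $S \setminus \{f_1, f_2\}$ lies in the same arc and hence ends up incident to a single copy, say $v_1$. The set $S' = \{f\} \cup (S \setminus \{f_1, f_2\})$ therefore consists of $k-1$ faces all incident to $v_1$, so the induction hypothesis applied at $v_1$ merges them using $k-2$ further splits. Together with the initial split this yields $1 + (k-2) = k-1$ splits, as required.

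The only point that needs careful verification is the claim that the remaining faces of $S$ all end up on the same copy of $v$; this follows directly from the edge-partition description $E' = E \setminus \{(v,w_1),\ldots,(v,w_d)\} \cup \{(v_2,w_1),\ldots,(v_2,w_{i-1})\} \cup \{(v_1,w_i),\ldots,(v_1,w_d)\}$ in the preliminaries, once one chooses the labelling of neighbors so that $f_1$ sits between $w_d$ and $w_1$ and $f_2$ sits between $w_{i-1}$ and $w_i$. Beyond this bookkeeping I do not anticipate any deeper obstacle in the argument.
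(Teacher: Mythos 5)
Your proposal is correct and follows essentially the same strategy as the paper, which also merges the faces of $S$ one at a time in their cyclic order around $v$ using one split each; you simply phrase the iteration as an induction and verify more explicitly that the remaining faces of $S$ stay incident to a common copy of $v$ after each split. No substantive difference.
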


\begin{proof}
Let $f_1, \ldots, f_k$ with $k = |S|$ be the faces of $S$ in the clockwise order as they appear around $v$ ($f_1$ chosen arbitrarily).
We iteratively merge $f_1$ with $f_i$ for $2 \leq i \leq k$, which requires in total $|S|-1$ splits (see \cref{fig:split_intro_3} and \cref{fig:split_intro_4}). \qed
\end{proof}

\begin{lemma}\label{lem:cover-to-split}
Let $G$ be a plane biconnected graph and let $S$ be a \cfc of $G$. Then $|S|-1$ splits are sufficient to merge the faces of $S$ into one. 
\end{lemma}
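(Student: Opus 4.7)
The plan is to prove the lemma by induction on $|S|$. The base case $|S|=1$ is trivial since no split is needed. For the inductive step with $|S|\ge 2$, I would first locate two faces $f_1,f_2\in S$ that share a boundary vertex $v$. Such a pair exists because $H[S\cup V]$ is connected and bipartite: any path between two face nodes alternates with vertex nodes, so some pair of consecutive face nodes along such a path shares a common vertex~$v$. I would then apply one split at $v$ with respect to $f_1$ and $f_2$, obtaining a plane graph $G'$ whose face set is $(F\setminus\{f_1,f_2\})\cup\{f^*\}$, where $f^*$ is the merged face. Set $S' = (S\setminus\{f_1,f_2\})\cup\{f^*\}$, so that $|S'|=|S|-1$.

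The crux is to verify that $S'$ is a connected face cover of $G'$, so that the induction hypothesis applies. Checking that $S'$ covers every vertex of $G'$ is routine: vertices in $V\setminus\{v\}$ previously covered by $f_1$ or $f_2$ are now covered by $f^*$, those covered by other faces in $S$ remain covered, and the two copies $v_1,v_2$ both lie on the boundary of $f^*$. The main obstacle is showing that the induced subgraph $H'[S'\cup V(G')]$ of the new face-vertex incidence graph is connected. I would do this by transforming, for any two nodes of $H'[S'\cup V(G')]$, a path between their preimages in $H[S\cup V]$ into a walk in $H'[S'\cup V(G')]$, using the natural correspondence that fixes all nodes except mapping $f_1,f_2\mapsto f^*$ and $v\mapsto v_1$ or $v_2$ (chosen per edge depending on which side of the split the adjacent face falls).

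The only delicate case is a subpath of the form $g$--$v$--$g'$ with $g,g'\in S\setminus\{f_1,f_2\}$ on opposite sides of the split at~$v$, so that $g$ is incident to $v_1$ while $g'$ is incident to $v_2$ (or vice versa) in~$G'$. Here I would insert the detour $g$--$v_1$--$f^*$--$v_2$--$g'$, which is a valid walk in $H'[S'\cup V(G')]$ since $f^*\in S'$ is incident to both copies of~$v$. Applying the induction hypothesis to $S'$ then yields $|S'|-1=|S|-2$ further splits to merge $S'$ into a single face, giving $|S|-1$ splits in total. One caveat is that $G'$ need not be biconnected even when $G$ is; however, the argument does not rely on biconnectedness, as the definition of connected face cover and every step of the inductive argument extend verbatim to arbitrary plane graphs.
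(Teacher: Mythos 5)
Your proof is correct, but it takes a genuinely different route from the paper. The paper does not argue by induction on $|S|$: it computes a spanning tree $T$ of $H[S\cup V]$, applies \cref{lem:single-vertex-split} at each graph-vertex $v$ of $T$ to merge the faces $F'(v)$ adjacent to $v$ in $T$, and obtains the count $|S|-1$ by rooting $T$ at a face and charging each split to a child face, so that every face except the root is charged exactly once. That argument plans all splits up front on the original graph and gets the counting almost for free from bipartiteness of $T$, but it leans on \cref{lem:single-vertex-split} and does not explicitly track how the embedding evolves between splits. Your induction instead merges one touching pair of faces per step and carries the invariant that the shrunken set remains a \cfc of the modified graph; the payoff is a self-contained argument that is more explicit about the intermediate graphs, and your two delicate points are exactly the right ones to flag: the rerouting of a path $g$--$v$--$g'$ through $f^*$ when $g$ and $g'$ land on different copies of $v$, and the observation that $G'$ may lose biconnectivity while the definition of a \cfc and every step of the induction still go through for general plane graphs (a subtlety the paper's proof also implicitly faces inside \cref{lem:single-vertex-split} but does not discuss). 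Both proofs are valid; yours trades the paper's clean global charging for a local invariant-maintenance argument.
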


\begin{proof}
Let $H' = H[S \cup V]$ and
compute a spanning tree $T$ in $H'$. For every vertex $v \in V(T) \cap V(G)$, we apply \cref{lem:single-vertex-split} with the face set $F'(v) = \{ f \in S \cap V(T) \; | \; (v,f)  \in E(T) \}$. 
We root the tree at an arbitrary face $f' \in S$, which provides a hierarchy on the vertices and faces in $T$.  Every vertex $v \in V(T) \cap V(G)$ requires by \cref{lem:single-vertex-split} $|F'(v)|-1$ splits. Note that that for all leaf vertices in $T$, $|F'(v)|=1$, i.e., they will not be split. Each split is charged to the children of $v$ in $T$. Since $H$ is bipartite, so is $T$. It follows that every face $f \in S \setminus \{ f' \}$ is charged exactly once by its parent, thus $|S|-1$ splits suffice. \qed
\end{proof}

\begin{lemma}\label{lem:split-to-cover}
Let $G$ be a plane biconnected graph and $\sigma$ a sequence of $k$ 
splits to make $G$ outerplane. Then $G$ has a \cfc of size $k+1$.
\end{lemma}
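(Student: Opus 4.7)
The plan is to directly construct a \cfc by tracking which original faces of $G$ get merged into a common face of $G' := \sigma(G)$. Let $f^*$ be a face of $G'$ incident to all its vertices (which exists since $G'$ is outerplane), and define $S \subseteq F(G)$ as the equivalence class of $f^*$ under the relation $f \sim f'$ iff $\sigma$ eventually merges $f$ and $f'$ into the same face of $G'$. I claim $S$ is a \cfc of $G$ with $|S| \leq k+1$.

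For the size bound, each of the $k$ splits merges exactly two classes of the current partition of $F(G)$ into one. Only those splits that unite the class of $f^*$ with some other class change $|S|$, and each such split grows $|S|$ by at least one. Since the class of $f^*$ starts at size $1$ and there are only $k$ splits, we obtain $|S| \leq k+1$.

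For the face-cover and connectivity conditions, the key tool is the following incidence-preservation claim, which I would prove by induction on the split index $i$: whenever $u'$ is a copy of $u \in V(G)$ incident to a face $F$ in the intermediate graph $G_i$, then $u$ is incident in $G$ to some face in the corresponding class $[F]$. Applied to $G' = G_k$ and $F = f^*$, this immediately gives the face-cover property, since every $u \in V(G)$ has at least one copy in $V(G')$ lying on $\partial f^*$. For connectivity of $H[S \cup V(G)]$, I would run a second induction over the merge history, showing that for every class $C$ arising during $\sigma$, any two faces of $C$ are connected in $H[C \cup V(G)]$. The inductive step handles a split that merges classes $C_1, C_2$ at a vertex $w$ of $G_i$ (itself a copy of some $w_0 \in V(G)$): the incidence claim yields original faces $g_1 \in C_1$ and $g_2 \in C_2$ with $w_0$ on both boundaries in $G$, so the edges $(w_0, g_1)$ and $(w_0, g_2)$ of $H$ bridge the two sub-paths into a single path inside $C_1 \cup C_2$.

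The main obstacle is establishing the incidence-preservation claim rigorously, especially for the case where a split of a vertex $w$ with respect to faces $g_1', g_2'$ of $G_i$ produces two copies $w_1, w_2$, both incident to the newly merged face $g_{12}'$ whose class is $[g_1'] \cup [g_2']$: one must carefully verify that the inductive hypothesis concerning $w$'s incidence to $g_1'$ and $g_2'$ transfers cleanly to both new copies being incident in $G$ to faces in the enlarged class.
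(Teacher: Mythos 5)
Your proposal is correct and follows essentially the same route as the paper: take $S$ to be the set of original faces that the split sequence merges into the single outer face, bound $|S|$ by observing that each split unites exactly two face classes, and derive covering and connectivity from the fact that all vertices end up on one merged face. The paper's own proof simply asserts these three facts in two sentences, whereas you spell out the inductions (incidence preservation and connectivity of each merge class) that make them rigorous; no gap in the underlying argument.
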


\begin{proof}
    Since by definition applying $\sigma$ to $G$ creates a single big face that is incident to all vertices in $V(G)$ by iteratively merging pairs of original faces defining a set $S\subseteq F$, it is clear that $S$ is a face cover of $G$ and since the result of the vertex splits and face merges creates a single face, set $S$ must also be connected. \qed
\end{proof}

As a consequence of \cref{lem:cover-to-split,lem:split-to-cover} we obtain that \ops and computing a minimum \cfc are equivalent. 

\begin{theorem}\label{thm:split-cfc}
    Let $G$ be a plane biconnected graph. Then $G$ has outerplane splitting number $k$ if and only if it has a connected face cover of size $k+1$.
\end{theorem}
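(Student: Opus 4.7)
The plan is to read the biconditional as an equality of extremal quantities and derive it directly from the two preceding lemmas. Let $\gamma(G)$ denote the minimum size of a connected face cover of $G$; then it suffices to establish $\osn(G) = \gamma(G) - 1$, from which the stated equivalence follows immediately.

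For the inequality $\osn(G) \le \gamma(G) - 1$, I would take a minimum \cfc $S$ with $|S| = \gamma(G)$ and apply \cref{lem:cover-to-split}: it produces a sequence of $|S|-1 = \gamma(G)-1$ splits that merges the faces of $S$ into a single face. Since $S$ covers every vertex of $G$, this merged face is incident to all vertices, so the resulting graph is outerplane, yielding $\osn(G) \le \gamma(G) - 1$.

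For the reverse inequality $\osn(G) \ge \gamma(G) - 1$, I would fix an optimal splitting sequence $\sigma$ of length $\osn(G)$ and apply \cref{lem:split-to-cover} to obtain a \cfc of size $\osn(G) + 1$. Minimality of $\gamma(G)$ then gives $\gamma(G) \le \osn(G) + 1$, i.e., $\osn(G) \ge \gamma(G) - 1$. Combining both bounds gives $\osn(G) = \gamma(G) - 1$, and the biconditional in the theorem follows.

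I do not anticipate any real obstacle: the theorem is essentially a bookkeeping consequence of the two lemmas already established. The only subtlety is interpreting the statement correctly — since both $\osn(G)$ and $\gamma(G)$ are well-defined extremal invariants, the biconditional must be read as relating these minima, which is made explicit by proving the corresponding equality.
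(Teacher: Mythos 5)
Your proof is correct and follows exactly the route the paper takes: the paper derives the theorem as an immediate consequence of \cref{lem:cover-to-split} (minimum cover of size $\gamma$ gives $\gamma-1$ splits, hence $\osn(G)\le\gamma(G)-1$) and \cref{lem:split-to-cover} (an optimal split sequence yields a cover of size $\osn(G)+1$, hence $\gamma(G)\le\osn(G)+1$). You simply spell out the bookkeeping that the paper leaves implicit in its one-sentence justification.
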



\section{\NP-completeness}\label{sec:hardness}
In this section, we prove that finding a \cfc of size $k$ (and thus \ops) is \NP-complete. The idea is to take the dual of a planar biconnected \VC instance and subdivide every edge once (we call this an \emph{all-1-subdivision}). Note that the all-1-subdivision of a graph $G$ corresponds to its vertex-edge incidence graph and the all-1-subdivision of the dual of $G$ corresponds to the face-edge incidence graph of $G$. A \cfc then corresponds to a vertex cover in the original graph, and vice versa. The following property greatly simplifies the arguments regarding \cref{def:cfc}. 

\begin{prp}\label{prp:subdiv}
	Let $G'$ be an all-1-subdivision of a biconnected planar graph $G$ and $S$ a set of faces that cover $V(G')$. Then $S$ is a \cfc of $G'$.
\end{prp}


\begin{proof}
Let $H$ be the all-1-subdivision of the dual of $G$, and assume to the contrary that the induced subgraph $H' = H[S \cup V(G)]$ is not connected. Then there exists an edge $(u,v) \in E(G)$ such that $u$ and $v$ are in different connected components in $H'$. 
Let $w$ be the subdivision vertex of $(u,v)$ in $G'$. 
As a subdivision vertex, $w$  is incident to only two faces, one of which, say $f$, must be contained in $S$. 
But $f$ is also incident to $u$ and $v$ and hence $u$ and $v$ are in the same component of $H'$ via face $f$, a contradiction. Hence $H'$ is connected and $S$ is a \cfc of $G'$. \qed
\end{proof}

\begin{figure}[t]
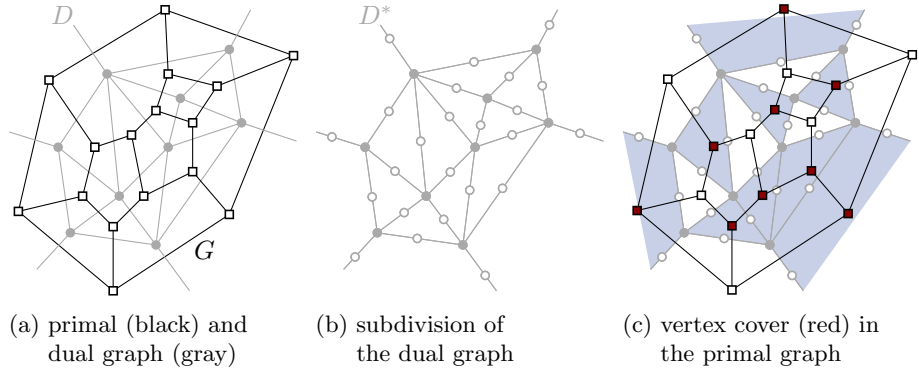

	\centering
	\begin{subfigure}[b]{.32\textwidth}
		\centering
		\includegraphics[page=2]{VC_reduction.pdf}
		\caption{primal (black) and\\dual graph (gray)}
		\label{fig:subfig99}
	\end{subfigure}
	\hfil
	\begin{subfigure}[b]{.32\textwidth}
		\centering
		\includegraphics[page=3]{VC_reduction.pdf}
		\caption{subdivision of\\the dual graph}
		\label{fig:subfig99}
	\end{subfigure}
	\hfil
	\begin{subfigure}[b]{.32\textwidth}
		\centering
		\includegraphics[page=4]{VC_reduction.pdf}
		\caption{vertex cover (red) in\\the primal graph}
		\label{fig:subfig99}
	\end{subfigure}
	\caption{
	Link between the primal graph $G$, its vertex cover, the dual $D$ and its subdivision $D^*$.}
	\label{fig:vertex_cover}
\end{figure}

The proof of the next theorem is very similar to the reduction of Bienstock and Monma to show \NP-completeness of \textsc{Face Cover}~\cite{bm-ccvfpg-88}; due to differences in the problem definitions, such as the connectivity of the face cover and whether the input graph is plane or not, we provide the full reduction for the sake of completeness.

\begin{theorem}
Deciding whether a plane biconnected graph $G$ has a \cfc of size at most $k$ is \NP-complete.
\end{theorem}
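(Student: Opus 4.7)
\medskip

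\noindent\textbf{Proof plan.} Membership in \NP{} is immediate: given a certificate $S \subseteq F$ of size at most $k$, one checks in polynomial time that every vertex of $G$ is incident to some $f \in S$ and that $H[S \cup V]$ is connected. For hardness, the plan is to reduce from \VC{} on planar biconnected graphs, which is known to be \NP-hard. Given an instance $(G,k)$ of planar biconnected \VC, I would construct the plane dual $D$ of $G$ and then form the all-$1$-subdivision $D^{\ast}$ of $D$ (see \cref{fig:vertex_cover}), and then claim that $G$ has a vertex cover of size at most $k$ iff $D^{\ast}$ has a \cfc of size at most $k$.

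The heart of the argument is to exploit the combinatorial bijection that the construction sets up. Since subdivision does not change the face structure of $D$, the faces of $D^{\ast}$ are in bijection with the faces of $D$, which in turn correspond to the vertices of $G$ by planar duality. Under this bijection, I would observe two things. First, each subdivision vertex $w$ of $D^{\ast}$ corresponds to an edge $e = uv$ of $G$ and lies on exactly the two faces of $D^{\ast}$ associated with $u$ and $v$; hence covering all subdivision vertices of $D^{\ast}$ is equivalent to selecting a vertex cover of $G$. Second, each original vertex of $D$ (viewed as a vertex of $D^{\ast}$) corresponds to a face $f$ of $G$ and lies on the faces of $D^{\ast}$ associated with the vertices on the boundary of $f$. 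Because $G$ is biconnected, every face boundary is a simple cycle, so any vertex cover of $G$ automatically contains at least one vertex from each face boundary of $G$, making the original vertices of $D$ covered ``for free''. Thus face covers of $D^{\ast}$ correspond bijectively and size-preservingly to vertex covers of $G$, and by \cref{prp:subdiv} every such face cover is already connected.

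The main technical obstacle I anticipate is verifying that $D^{\ast}$ is itself a \emph{plane biconnected} graph, as required by the problem statement. Planarity is inherited from $D$, since edge subdivisions preserve planar embeddings. For biconnectivity, I would argue in two steps: first, that the dual $D$ of a biconnected plane graph $G$ is itself biconnected (a cut vertex $f$ in $D$ would split the faces adjacent to $f$ into pieces that are not joined outside $f$, contradicting the fact that in a biconnected plane graph every face boundary is a simple cycle shared coherently with neighboring faces); second, that subdividing every edge of a biconnected graph produces a biconnected graph (subdivision introduces only degree-$2$ vertices, which cannot be cut vertices, and preserves the absence of cut vertices of the original graph). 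Combined with the size-preserving correspondence above, this completes the polynomial-time reduction and hence the \NP-completeness proof.
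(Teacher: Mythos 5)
Your proposal is correct and follows essentially the same route as the paper: reduce from \VC on planar biconnected graphs via the all-1-subdivision $D^*$ of the dual, identify faces of $D^*$ with vertices of $G$ and subdivision vertices with edges of $G$, and invoke \cref{prp:subdiv} to get connectivity for free. The only differences are cosmetic --- the paper reduces from the cubic case and argues the two directions separately, while you additionally (and reasonably) note that $D^*$ must be verified to be plane and biconnected.
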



\begin{proof}
Clearly the problem is in $\NP$. To prove hardness, we first introduce some notation. Let $G$ be a plane biconnected graph and $D$ the corresponding dual graph.
Furthermore, let $D^*$ be 
the all-1-subdivision of $D$.
We prove now that a \cfc $S^*$ of size $k$ in $D^*$ is in a one-to-one correspondence with a vertex cover $S$ of size $k$ in $G$ (see \cref{fig:vertex_cover}). More specifically, we show that the dual vertices of the faces of $S^*$ that form a \cfc in $D^*$, are a vertex cover for $G$ and vice versa. The reduction is from the \NP-complete \VC problem in biconnected 
planar graphs in which all vertices have degree 3 (cubic graphs)~\cite{Mohar01}.

\paragraph{\CFC $\Rightarrow$ \VC:} 
Let $G$ be such a biconnected plane \VC instance. Assume we have a \cfc $S^*$ with $|S^*|=k$ for $D^*$. Note that the faces of $D^*$ correspond to the vertices in $G$. We claim that the faces $S^*$, when mapped to the corresponding vertices $S \subseteq V(G)$ are a vertex cover for $G$. Assume otherwise, that is, there exists an edge $e^* \in E(G)$ that has no endpoints in $S$. However, $e^*$ has a dual edge $e \in E(D)$ and therefore a subdivision vertex $v_e \in V(D^*)$. Hence, there is a face $f \in S^*$ that has $v_e$ on its boundary by definition of \cfc. And when mapped to $D$, $f$ has $e$ on its boundary, which implies that the primal edge $e^*$ has at least one endpoint in $S^*$; a contradiction.

\paragraph{\VC $\Rightarrow$ \CFC:} 
To prove that a vertex cover $S$ induces a \cfc $S^*$ in $D^*$, we have to prove that $S^*$ covers all vertices and the induced subgraph in the \fvig $H$ is connected. We proceed as in the other direction. $S$ covers all edges in $E(G)$, thus every edge $e \in E(D)$ is bounded by at least one face of $S^*$. Hence, every subdivision vertex in $V(D^*)$ is covered by a face of $S^*$. Furthermore, every vertex in $D^*$ is adjacent to a subdivision vertex, thus, also covered by a face in $S^*$.
Since $S^*$ is covering all vertices, we obtain from \cref{prp:subdiv} that $S^*$ is a \cfc.
\qed
\end{proof}


\section{Feedback Vertex Set Approach}\label{sec:fvs}


A \emph{feedback vertex set} $S^\circ\subset V(G)$ of a graph $G$ is a vertex subset such that the induced subgraph $G[V(G)\setminus S^\circ]$ is acyclic. 
We show here that finding a \cfc $S$ of size $k$ for a plane biconnected graph $G$ is equivalent to finding a feedback vertex set $S^\circ \subset V(D)$ of size $k$ in the dual graph $D$ of $G$. 
The \emph{weak dual}, i.e., the dual without a vertex for the outer face, of an outerplanar graph is a forest. Thus we must find the smallest number of splits in $G$ which transform $D$ into a forest.
In other words, we must must break all the cycles in $D$, and hence all of the vertices in the feedback vertex set $S^\circ$ of $D$ correspond to the faces of $G$ that should be merged together (see~\cref{fig:algsteps}).

\begin{figure}[t]
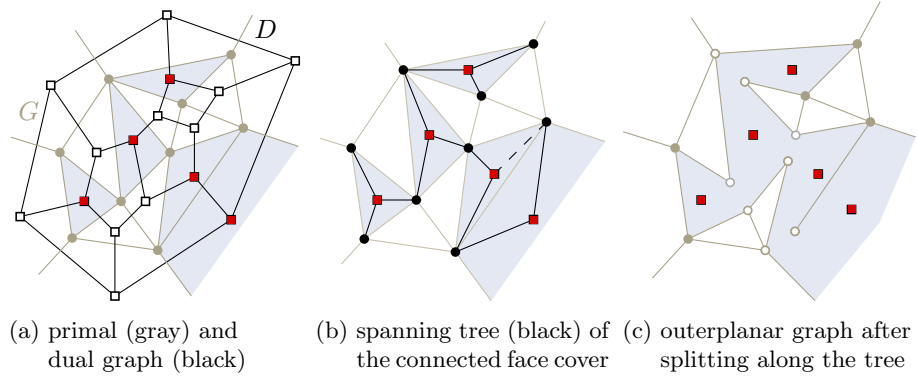

\centering
\begin{subfigure}[b]{.32\textwidth}
    \centering
    \includegraphics[page=2]{polyalg_steps_alt.pdf}
    \caption{primal (gray) and\\dual graph (black)}
    \label{fig:subfig3}
\end{subfigure}
\hfil
\begin{subfigure}[b]{.32\textwidth}
    \centering
    \includegraphics[page=3]{polyalg_steps_alt.pdf}
    \caption{spanning tree (black) of\\the \cfc}
    \label{fig:subfig3}
\end{subfigure}
\hfil
\begin{subfigure}[b]{.32\textwidth}
    \centering
    \includegraphics[page=4]{polyalg_steps_alt.pdf}
    \caption{outerplanar graph after\\splitting along the tree}
    \label{fig:subfig3}
\end{subfigure}
	\caption{
	The \cfc\ (blue) is a feedback vertex set (red) in the dual.
	}
	\label{fig:algsteps}
\end{figure}


\begin{prp}\label{prp:fvsconnec}
	Let $H$ be the \fvig of a plane biconnected graph $G$ and let $S^\circ$ be a feedback vertex set in the dual $D$ of $G$. 
	Then $S^\circ$ induces a \cfc $S$ in $G$.
\end{prp}

\begin{proof}
We need to show that $S^\circ$ is a face cover and that it is connected. 
First, assume there is a vertex $v \in V(G)$ of degree $\deg(v) = d$ that is not incident to a vertex in $S^\circ$, i.e., a face of $G$.
Since $G$ is biconnected, $v$ is incident to $d$ faces $f_1, \dots, f_d$, none of which is contained in $S^\circ$.
But then $D[V(D) \setminus S^\circ]$ has a cycle $(f_1, \dots, f_d)$, a contradiction. 

Next, we define $\overline{S^\circ} = V(D)\setminus S^\circ$ as the complement of the feedback vertex set $S^\circ$ in $D$. 
Assume that $H[V\cup S^\circ]$ has at least two separate connected components $C_1, C_2$. 
Then there must exist a closed curve in the plane separating $C_1$ from $C_2$, which avoids the faces in $S^\circ$ and instead passes through a sequence $(f_1, \dots, f_\ell)$ of faces in $\overline{S^\circ}$, where each pair $(f_i, f_{i+1})$ for $i \in \{1, \dots, \ell-1\}$ as well as $(f_\ell, f_1)$ are adjacent in the dual $D$. 
Again this implies that there is a cycle in $D[V(D) \setminus S^\circ]$, a contradiction. 
Thus $S^\circ$ is a \cfc. \qed
\end{proof}

\begin{theorem}\label{thm:split-fvs}
    A plane biconnected graph $G$ has outerplane splitting number $k$ if and only if its dual $D$ has a minimum feedback vertex set of size $k+1$.
\end{theorem}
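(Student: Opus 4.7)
\medskip
\noindent\textbf{Proof plan.} By \cref{thm:split-cfc}, it suffices to prove that the minimum size of a connected face cover of $G$ equals the minimum size of a feedback vertex set of $D$. Since vertices of $D$ are in bijection with faces of $G$, I will in fact show that, as subsets of $F=V(D)$, any FVS of $D$ is a \cfc of $G$, and conversely any \cfc of $G$ is an FVS of $D$. Composing these two inclusions with \cref{thm:split-cfc} then yields the stated equivalence with $k+1$ on both sides.

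\medskip
\noindent\textbf{The easy direction.} The inclusion ``FVS $\Rightarrow$ \cfc'' is exactly \cref{prp:fvsconnec}: any feedback vertex set $S^\circ\subseteq V(D)$ induces a \cfc in $G$ of the same cardinality. This already gives $\min |\textsc{cfc}|\le \min |\textsc{fvs}|$.

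\medskip
\noindent\textbf{The hard direction.} For the converse, I would take a \cfc $S\subseteq F$ and argue that $D[V(D)\setminus S]$ is acyclic. Suppose for contradiction that it contains a cycle $\mathcal{C}=(f_1,\dots,f_\ell)$ with each $f_i\in F\setminus S$. Dualising, the primal edges $e_1,\dots,e_\ell$ separating consecutive $f_i,f_{i+1}$ form an edge cut of $G$, and because $G$ is biconnected this cut is nontrivial: drawing a closed Jordan curve $\gamma$ that threads through the interiors of $f_1,\dots,f_\ell$ and crosses each $e_i$ exactly once splits the plane into two open regions, each containing at least one vertex of $G$. The key observation is that every face $f\in S$ lies entirely in one of these two regions: the curve $\gamma$ only visits faces of $\mathcal{C}$, so $f$ itself is disjoint from $\gamma$, and its boundary cannot contain any cut edge $e_i$ because the only two faces incident to $e_i$ are $f_i,f_{i+1}\notin S$. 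Hence all vertices of $G$ on the boundary of any $f\in S$ lie in the same region as $f$. Consequently $H[V\cup S]$ decomposes according to the partition of vertices induced by $\gamma$, which has two nonempty parts, so $H[V\cup S]$ is disconnected—contradicting the assumption that $S$ is a \cfc.

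\medskip
\noindent\textbf{Obstacle and wrap-up.} The only delicate step is the topological argument above; everything else is bookkeeping. Once both inclusions are established, they give $\min|\textsc{cfc}(G)|=\min|\textsc{fvs}(D)|$. Combined with \cref{thm:split-cfc}, which states that $\osn(G)=k$ iff $G$ admits a \cfc of size $k+1$, this immediately yields $\osn(G)=k$ iff $D$ has a minimum feedback vertex set of size $k+1$, completing the proof. \qed
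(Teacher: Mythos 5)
Your proof is correct, and the forward direction (feedback vertex set $\Rightarrow$ connected face cover, via \cref{prp:fvsconnec} and then \cref{lem:cover-to-split}) coincides with the paper's. The converse direction, however, is genuinely different. The paper does not argue directly that a \cfc is a feedback vertex set; instead it routes through the splitting operation: given $k$ splits producing an outerplane $G'$, it takes the \cfc $S$ from \cref{lem:split-to-cover} and observes that the complement $F\setminus S$ consists of the surviving inner faces of $G'$, whose weak dual is a forest, so $S$ is a feedback vertex set. You instead prove the purely combinatorial statement that \emph{every} \cfc $S$ (minimum or not) has acyclic complement in $D$, via a Jordan-curve argument: a dual cycle in $F\setminus S$ yields a closed curve avoiding every face of $S$ and every vertex, each side of which contains a vertex of $G$ (an endpoint of a crossed primal edge), so $H[V\cup S]$ would be disconnected. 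This is essentially the mirror image of the topological argument the paper already uses inside \cref{prp:fvsconnec}, and together with that property it shows the stronger fact that the family of connected face covers of $G$ and the family of feedback vertex sets of $D$ are literally the same set system---not merely equicardinal at the optimum. Your route also sidesteps some bookkeeping the paper's argument needs (identifying the unmerged faces of $G$ with faces of $G'$, and the claim that $G'$ remains biconnected so that its weak dual is a tree rather than just a forest). The one step worth spelling out in a full write-up is that the closed curve can be chosen simple and crossing each cut edge exactly once, which holds because the dual cycle is simple and, $G$ being biconnected plane, each face interior is connected and each edge bounds exactly the two faces consecutive on the cycle.
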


\begin{proof}
    Let $S^\circ$ be a minimum feedback vertex set of the dual $D$ of $G$ with cardinality $|S^\circ| = k+1$ and let $H$ be the \fvig of $G$. 
    We know from \cref{prp:fvsconnec} that $H' = H[V(G) \cup S^\circ]$ is connected and hence $S^\circ$ induces a \cfc $S$ with $|S| = k+1$. Then by \cref{lem:cover-to-split} $G$ has $\osn(G) \le k$.
    
    Let conversely $\sigma$ be a sequence of $k$ vertex splits that turn $G$ into an outerplane graph $G'$ and let $F$ be the set of faces of $G$. 
    By \cref{lem:split-to-cover} we obtain a \cfc $S$ of size $k+1$ consisting of all faces that are merged by $\sigma$.
    The complement $\overline{S} = F\setminus S$ consists of all faces of $G$ that are not merged by the splits in $\sigma$ and thus are the remaining (inner) faces of the outerplane graph $G'$.
    Since $G'$ is outerplane and biconnected, $\overline{S}$ is the vertex set of the weak dual of $G'$, which must be a tree. 
    Hence $S$ is a feedback vertex set in $D$ of size $k+1$ and the minimum feedback vertex set in $D$ has size at most $k+1$. \qed
\end{proof}        
    
	


Since all faces in a maximal planar graph are triangles, the maximum vertex degree of its dual is 3. 
Thus, we can apply the polynomial-time algorithm of Ueno et al.~\cite{UenoKG88} to this dual, which computes the minimum feedback vertex set in graphs of maximum degree 3 by reducing the instance to polynomial-solvable matroid parity problem instance, and obtain

\begin{corollary}
	We can solve \ops\ for maximal planar graphs in polynomial time.
\end{corollary}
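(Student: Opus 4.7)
The plan is to chain together the equivalence already established in this section with a known polynomial-time algorithm for \FVS on subcubic graphs. First I would observe that any maximal planar graph $G$ on at least four vertices is $3$-connected, hence in particular biconnected, so Theorem~\ref{thm:split-fvs} applies: $\osn(G) = k$ iff the dual $D$ of $G$ admits a minimum feedback vertex set of size $k+1$. The handful of trivial cases ($n \le 3$) can be dispatched by inspection, since $\osn(G)=0$ whenever $G$ is already outerplane.

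Next I would exploit the structural peculiarity of the dual. Because every face of $G$ is bounded by exactly three edges, every vertex of $D$ has degree exactly $3$, so $D$ is a cubic plane (multi)graph. This places the minimum \FVS instance on $D$ squarely in the regime handled by Ueno, Kajitani and Gotoh~\cite{UenoKG88}, who give a polynomial-time algorithm for minimum feedback vertex set on graphs of maximum degree $3$ via a reduction to matroid parity. Running that algorithm on $D$ returns, in polynomial time, a minimum feedback vertex set $S^\circ$ of $D$, and setting $k = |S^\circ|-1$ yields $\osn(G)$.

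If one wants not only the number but an actual witness sequence of splits, the plan is constructive: by Proposition~\ref{prp:fvsconnec}, the set $S$ of faces of $G$ corresponding to $S^\circ$ is a \cfc of $G$; computing any spanning tree $T$ of the induced subgraph $H[V(G) \cup S]$ of the \fvig $H$ and then performing the splits prescribed by Lemma~\ref{lem:single-vertex-split} at each primal vertex of $T$, as in the proof of Lemma~\ref{lem:cover-to-split}, produces the desired $|S|-1 = k$ splits. Each of these steps (duality, spanning tree, per-vertex merging) runs in polynomial time, so the whole pipeline is polynomial. The only mild obstacle is bookkeeping: the dual $D$ may contain parallel edges (and, if one were to extend beyond biconnected primals, loops), but the Ueno et al.\ algorithm and Lemma~\ref{lem:cover-to-split} are insensitive to multi-edges, so no additional argument is required.
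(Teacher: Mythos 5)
Your proposal follows exactly the paper's route: observe that the dual of a maximal planar graph is (sub)cubic because every face is a triangle, run the Ueno--Kajitani--Gotoh matroid-parity algorithm for minimum feedback vertex set on graphs of maximum degree $3$, and translate the answer back via \cref{thm:split-fvs}. The extra details you supply (dispatching $n \le 3$, extracting a witness split sequence via \cref{prp:fvsconnec} and \cref{lem:cover-to-split}, and noting that parallel edges in the dual are harmless) are correct and only elaborate on what the paper leaves implicit.
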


Many other existing results for feedback vertex set  extend to \ops, e.g., it has a kernel of size $13k$~\cite{BonamyK16} and admits a \PTAS~\cite{DemaineH05}.

\section{Lower and Upper Bounds}\label{sec:bounds}

In this section we provide some upper and lower bounds on the outerplane splitting number in certain maximal planar graphs.

\subsection{Upper Bounds}

Based on the equivalence of \cref{thm:split-fvs} we obtain upper bounds on the outerplane splitting number from suitable upper bounds on the feedback vertex set problem, which has been studied for many graph classes, among them cubic graphs~\cite{BondyHS87}. 
Liu and Zhao~\cite{LiuZ96} showed that cubic graphs $G=(V,E)$ of girth at least four (resp., three) have a minimum feedback vertex set of size at most $\frac{|V|}{3}$ (resp., $\frac{3|V|}{8}$). 
Kelly and Liu~\cite{kl-msfvspgglf-17} showed that connected planar subcubic graphs of girth at least five have a minimum feedback vertex set of size at most $\frac{2|V|+2}{7}$.
Recall that the girth of a graph is the length of its shortest cycle.

\begin{proposition}\label{cor:upper}
	The outerplane splitting number of a maximal planar graph $G=(V,E)$ of minimum degree (i) 3, (ii) 4, and (iii) 5, respectively, is at most (i) $\frac{3 |V| - 10}{4}$, (ii) $\frac{2|V|-7}{3}$, and (iii) $\frac{4|V|-13}{7}$, respectively.
\end{proposition}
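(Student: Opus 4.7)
The plan is to leverage the equivalence $\osn(G)=|S^\circ|-1$ from \cref{thm:split-fvs} and turn each of the three cited feedback-vertex-set results into the corresponding bound on $\osn(G)$. First I would record the structural properties of the dual $D$ of $G$: since $G$ is maximal planar, every face is a triangle, so Euler's formula gives $|V(D)| = 2|V|-4$, the dual $D$ is $3$-regular, and because two triangles in a triangulation on at least four vertices share at most one edge, $D$ is also simple. This alone suffices to apply the general cubic-graph bound in case~(i).

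For cases~(ii) and~(iii) I would need additional control on the girth of $D$. The key step is the classical observation that in a $3$-connected plane graph the girth of the dual equals the minimum degree of the primal: a shortest dual cycle bounds a plane region that, by $3$-connectivity, must contain exactly one primal vertex, and the cycle length is then that vertex's degree. Since a maximal planar graph on at least four vertices is $3$-connected, the hypothesis on the minimum degree of $G$ translates directly into a lower bound on the girth of $D$; note that $D$ is additionally planar in case~(iii), as required by the Kelly--Liu theorem.

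With these properties in hand I would plug $|V(D)|=2|V|-4$ into each of the three bounds and subtract one: the general cubic bound $\tfrac{3|V(D)|}{8}$ yields~(i), the Liu--Zhao bound $\tfrac{|V(D)|}{3}$ for cubic graphs of girth at least $4$ yields~(ii), and the Kelly--Liu bound $\tfrac{2|V(D)|+2}{7}$ for connected planar subcubic graphs of girth at least $5$ yields~(iii); routine arithmetic then gives exactly the three stated expressions. The step I expect to be the main obstacle is making the girth-equals-minimum-degree claim rigorous, as it is the only part that is not a black-box application of a prior result and requires invoking $3$-connectedness of maximal planar graphs together with a short topological argument about the region enclosed by a shortest dual cycle.
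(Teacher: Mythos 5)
Your overall route is exactly the paper's: pass to the dual $D$ (cubic, simple, with $2|V|-4$ vertices), bound its girth from below by the minimum degree of $G$, plug into the three cited feedback vertex set bounds, and subtract one via \cref{thm:split-fvs}; the arithmetic in all three cases checks out. The one step you yourself flag as the main obstacle is, however, also the one where your sketch goes wrong. The ``classical observation'' you invoke is false as stated: for a $3$-connected plane graph, cycles of the dual correspond to bonds of the primal, so the girth of $D$ equals the \emph{edge connectivity} $\lambda(G)$, not the minimum degree $\delta(G)$, and these can differ even for $3$-connected planar graphs (take two icosahedra, remove a triangular face from each, and join the two boundary triangles by three edges: minimum degree $5$, yet there is a bond of size $3$ and hence a dual $3$-cycle). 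Correspondingly, a shortest dual cycle need not enclose exactly one primal vertex, and $3$-connectivity alone will never force it to.

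What rescues the claim is the triangulation structure, not $3$-connectedness. A dual cycle of length $\ell$ crosses $\ell$ primal edges, and consecutive crossed edges are two sides of a common triangular face, hence share a vertex; analysing this configuration shows that a dual $3$-cycle forces a vertex of degree $3$ and a dual $4$-cycle forces a vertex of degree $4$ (or $G=K_4$), which is exactly what cases (ii) and (iii) require. Equivalently, one can cite that maximal planar graphs satisfy $\lambda(G)=\delta(G)$. The paper's own proof takes a different (and also rather terse) shortcut here, reading the girth of $D$ off its facial cycles, whose lengths are the primal vertex degrees. So: same approach and correct conclusion, but the topological argument you propose for the girth step should be replaced by one that genuinely uses that every face of $G$ is a triangle.
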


\begin{proof}
    Maximal planar graphs with $n = |V|$ vertices have $2n-4$ faces. 
    So the corresponding dual graphs have $2n-4$ vertices. 
    Moreover, since the degree of a vertex in $G$ corresponds to the length of a facial cycle in the dual, graphs with minimum vertex degree 3, 4, or 5 have duals with girth 3, 4, or 5, respectively.
    So if the minimum degree in $G$ is 3, we obtain an upper bound on the feedback vertex set of $(3n - 6)/4$; if the minimum degree is 4, the bound is $(2n-4)/3$; and if the minimum degree is 5, the bound is $(4n-6)/7$. 
    The claim then follows from \cref{thm:split-fvs}. \qed
\end{proof}



\subsection{Lower Bounds}


We first provide a generic lower bound for the outerplane splitting number of maximal planar graphs. 
Let $G$ be an $n$-vertex maximal planar graph with $2n-4$ faces. 
Each face is a triangle incident to three vertices. 
In a minimum-size connected face cover $S^*$, the first face covers three vertices. 
Due to the connectivity requirement, all other faces can add at most two newly covered vertices. 
Hence we need at least $\frac{n-1}{2}$ faces in any connected face cover. 
By \cref{thm:split-cfc} this implies that $\osn(G) \ge \frac{n-3}{2}$.

\begin{proposition}
Any maximal planar graph $G$ has outerplane splitting number at least $\frac{|V(G)| - 3}{2}$.
\end{proposition}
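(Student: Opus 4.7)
The plan is to invoke \cref{thm:split-cfc} to translate the claim into a lower bound on the size of a connected face cover. Writing $n = |V(G)|$, it suffices to show that every connected face cover $S$ of a maximal planar graph $G$ satisfies $|S| \ge \frac{n-1}{2}$, since \cref{thm:split-cfc} then yields $\osn(G) \ge |S| - 1 \ge \frac{n-3}{2}$.

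To bound $|S|$ from below I would set up a counting argument that exploits the two defining features of this setting: every face of a maximal planar graph is a triangle (hence has exactly three boundary vertices), and the induced subgraph $H[S\cup V(G)]$ of the face-vertex incidence graph is connected by the definition of a connected face cover. Concretely, I would fix any spanning tree $T$ of $H[S \cup V(G)]$, root it at an arbitrarily chosen face $f_0 \in S$, and enumerate the faces of $S$ as $f_0, f_1, \ldots, f_{|S|-1}$ in BFS order of $T$. Because $H$ is bipartite, each face $f_i$ with $i \ge 1$ has some vertex $v \in V(G)$ as its parent in $T$, and $v$ in turn has some face $f_j \in S$ as its parent with $j < i$; thus $f_i$ shares the boundary vertex $v$ with the earlier face $f_j$.

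The counting then goes through immediately: $f_0$ contributes exactly three covered vertices, and each subsequent $f_i$ introduces at most $3 - 1 = 2$ new covered vertices since at least one of its three boundary vertices is already covered by $f_j$. Summing yields $n \le 3 + 2(|S|-1) = 2|S|+1$, i.e.\ $|S| \ge (n-1)/2$, as required.

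The only non-routine step is verifying that the BFS ordering really forces each non-root face of $S$ to share a boundary vertex with an earlier face of $S$; this is where the bipartiteness of $H$ together with the connectivity of $H[S \cup V(G)]$ is essential, and it is the main (short) obstacle in the argument. Everything else is a direct application of \cref{thm:split-cfc} combined with the fact that all faces of a maximal planar graph are triangles.
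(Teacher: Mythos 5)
Your proposal is correct and follows essentially the same route as the paper: translate via \cref{thm:split-cfc} to a lower bound on connected face covers, observe that the first triangle covers three vertices and, by connectivity, every subsequent face covers at most two new ones, giving $|S| \ge \frac{n-1}{2}$ and hence $\osn(G) \ge \frac{n-3}{2}$. Your BFS ordering of a spanning tree of $H[S \cup V]$ is just a more explicit justification of the step the paper summarizes as ``due to the connectivity requirement.''
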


Next, towards a better bound, we define a family of maximal planar graphs $T_d = (V_d, E_d)$ of girth 3 for $d \ge 0$ that have outerplane splitting number at least $\frac{2 |V_d| - 8}{3}$.
The family are the complete planar 3-trees of depth $d$, which are defined recursively as follows. 
The graph $T_0$ is the 4-clique $K_4$.
To obtain $T_d$ from $T_{d-1}$ for $d \ge 1$ we subdivide each inner triangular face of $T_{d-1}$ into three triangles by inserting a new vertex and connecting it to the three vertices on the boundary of the face. 

\begin{proposition}
  The complete planar 3-tree $T_d$ of depth $d$ has outerplane splitting number at least $\frac{2 |V_d| - 8}{3}$.
\end{proposition}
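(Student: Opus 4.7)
The plan is to invoke the feedback-vertex-set characterization from \cref{thm:split-fvs} and exhibit $3^d$ pairwise vertex-disjoint triangles in the dual graph $D = D(T_d)$, each of which must contribute a vertex to any feedback vertex set of $D$.

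First I would set up the recursion. $T_{d-1}$ has exactly $3^d$ inner triangular faces, and the construction of $T_d$ adds one new vertex per such face; let $W$ denote this set of $3^d$ ``newest'' vertices. Each $w \in W$ is the subdivider of some inner triangle $f = \{a,b,c\}$ of $T_{d-1}$, and hence in $T_d$ it has degree $3$ and is incident to exactly the three faces $\{w,a,b\}, \{w,b,c\}, \{w,a,c\}$. These three faces pairwise share the edges $wa, wb, wc$, so they form a triangle $\Delta_w$ in the dual $D$.

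Second I would argue that the $3^d$ triangles $\{\Delta_w : w \in W\}$ are pairwise vertex-disjoint. For distinct $w, w' \in W$, the only neighbors of $w$ in $T_d$ are the three corners of its pre-subdivision face, all of which have level strictly less than $d$; in particular $w$ and $w'$ are non-adjacent, so no triangular face of $T_d$ contains both. Equivalently, the three faces incident to $w$ lie strictly inside the planar region bounded by $f$, and distinct inner faces of $T_{d-1}$ have disjoint interiors, so no face of $T_d$ appears in two different $\Delta_w$.

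Since every feedback vertex set of $D$ must delete at least one vertex from each of these pairwise disjoint triangles, the minimum FVS of $D$ has size at least $3^d$. By \cref{thm:split-fvs} this yields $\osn(T_d) \ge 3^d - 1$. A short calculation using $|V_d| = 4 + \sum_{k=1}^{d} 3^k = \frac{3^{d+1}+5}{2}$ rewrites $3^d - 1$ as $\frac{2|V_d|-8}{3}$, completing the bound; the boundary case $d=0$, where $W = \emptyset$, is trivial because the right-hand side is $0$. The step I expect to need the most care is the vertex-disjointness of the family $\{\Delta_w\}$: it is tempting to argue only edge-disjointness, but the FVS lower bound really does require disjointness of the dual \emph{vertices}, and this is precisely what the non-adjacency of level-$d$ vertices buys us.
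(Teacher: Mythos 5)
Your proof is correct, and the arithmetic checks out: $3^d-1=\frac{2n_d-5}{3}-1=\frac{2|V_d|-8}{3}$ matches the claimed bound. The difference from the paper is which equivalence you route the argument through. The paper stays on the primal side and invokes \cref{thm:split-cfc}: each of the $3^d$ leaf-level vertices has degree~3 and its three incident faces are exclusive to it, so \emph{any} face cover (connected or not) already needs $3^d$ faces --- one sentence, no further verification needed. You instead invoke \cref{thm:split-fvs} and translate the same structural fact into the dual: the three faces around each leaf-level vertex form a triangle in $D$, and these $3^d$ triangles are pairwise vertex-disjoint, giving a cycle-packing lower bound on the feedback vertex set. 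The two arguments pin down literally the same constraint (one of the three faces around each $w$ must be selected), but your version carries the extra obligation of proving vertex-disjointness of the dual triangles, which you correctly discharge via non-adjacency of leaf-level vertices; you are also right that edge-disjointness alone would not suffice for the FVS packing bound. So your route is sound but slightly longer; the face-cover formulation lets the paper skip the disjointness step entirely because ``no face covers two leaf-level vertices'' is exactly the statement needed, with no detour through cycles. One cosmetic point: for $d=0$ the paper's convention treats the single interior vertex of $K_4$ as the level-$0$ leaf vertex (note the sum $\sum_{i=0}^{d}3^i$ starts at $i=0$), which gives the bound $\osn(T_0)\ge 0$ directly rather than via your trivial fallback --- but either reading is fine.
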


\begin{proof}
Each $T_d$ is a maximal planar graph with $n_d = 3 + \sum_{i=0}^{d} 3^i = \frac{3^{d+1} + 5}{2}$ vertices.
All $3^d$ leaf-level vertices added into the triangular faces of $T_{d-1}$ in the last step of the construction have degree 3 and are incident to three exclusive faces, i.e., there is no face that covers more than one of these leaf-level vertices. 
This immediately implies that any face cover of $T_d$, connected or not, has size at least $3^d$.
From $n_d = \frac{3^{d+1} + 5}{2}$ we obtain $d = \log_3 \frac{2n_d-5}{3}$ and $3^d = \frac{2n_d-5}{3}$. 
\cref{thm:split-cfc} then implies that $\osn(T_d) \ge \frac{2n_d-8}{3}$. \qed
\end{proof}

\section{Open Problems}
We have introduced the \ops\ problem and established its complexity for plane biconnected graphs. 
The most important open question revolves around the embedding requirement. 
Splitting operations can be defined more loosely and allow for any new embedding and neighborhood of the split vertices. 
In general, it is also of interest to understand how the problem differs when the input graph does not have an embedding at all, as in the original splitting number problem.
Since \ops\ can be solved in polynomial time for maximal planar graphs but is hard for plane biconnected graphs, there is a complexity gap to be closed when faces of degree more than three are involved.
Vertex splitting in graph drawings has so far been studied to achieve planarity and outerplanarity. A natural extension is to study it for other graph classes or graph properties.

\bibliographystyle{splncs04}
\bibliography{references}

\begin{thebibliography}{10}
\providecommand{\url}[1]{\texttt{#1}}
\providecommand{\urlprefix}{URL }
\providecommand{\doi}[1]{https://doi.org/#1}

\bibitem{Abu-KhzamBFS21}
Abu{-}Khzam, F.N., Barr, J.R., Fakhereldine, A., Shaw, P.: A greedy heuristic
  for cluster editing with vertex splitting. In: Proc. 4th International
  Conference on Artificial Intelligence for Industries (AI4I). pp. 38--41.
  {IEEE} (2021). \doi{10.1109/AI4I51902.2021.00017}

\bibitem{splittingbipfpt}
Ahmed, R., Kobourov, S.G., Kryven, M.: An {FPT} algorithm for bipartite vertex
  splitting. In: Angelini, P., von Hanxleden, R. (eds.) Proc. 30th
  International Symposium on Graph Drawing and Network Visualization (GD'22).
  LNCS, Springer (2022). \doi{10.48550/arXiv.2208.12898}

\bibitem{admpt-hgvwcaea-22}
Angori, L., Didimo, W., Montecchiani, F., Pagliuca, D., Tappini, A.: Hybrid
  graph visualizations with chordlink: Algorithms, experiments, and
  applications. {IEEE} Trans. Vis. and Comput. Graph.  \textbf{28}(2),
  1288--1300 (2022). \doi{10.1109/TVCG.2020.3016055}

\bibitem{Baker94}
Baker, B.S.: Approximation algorithms for {NP}-complete problems on planar
  graphs. J. {ACM}  \textbf{41}(1),  153--180 (1994).
  \doi{10.1145/174644.174650}

\bibitem{bk-btg-79}
Bernhart, F., Kainen, P.C.: The book thickness of a graph. J. Comb. Theory,
  Ser. B  \textbf{27}(3),  320--331 (1979). \doi{10.1016/0095-8956(79)90021-2}

\bibitem{Biedl_2015}
Biedl, T.: On triangulating k-outerplanar graphs. Discret. Appl. Math.
  \textbf{181},  275--279 (2015). \doi{10.1016/j.dam.2014.10.017}

\bibitem{Biedl11}
Biedl, T.C.: Small drawings of outerplanar graphs, series-parallel graphs, and
  other planar graphs. Discret. Comput. Geom.  \textbf{45}(1),  141--160
  (2011). \doi{10.1007/s00454-010-9310-z}

\bibitem{bm-ccvfpg-88}
Bienstock, D., Monma, C.L.: On the complexity of covering vertices by faces in
  a planar graph. SIAM J. Comput.  \textbf{17}(1),  53–76 (1988).
  \doi{10.1137/0217004}

\bibitem{Bodlaender98}
Bodlaender, H.L.: A partial $k$-arboretum of graphs with bounded treewidth.
  Theor. Comput. Sci.  \textbf{209}(1-2),  1--45 (1998).
  \doi{10.1016/S0304-3975(97)00228-4}

\bibitem{BodlaenderF02}
Bodlaender, H.L., Fomin, F.V.: Approximation of pathwidth of outerplanar
  graphs. J. Algorithms  \textbf{43}(2),  190--200 (2002).
  \doi{10.1016/S0196-6774(02)00001-9}

\bibitem{BonamyK16}
Bonamy, M., Kowalik, L.: A 13k-kernel for planar feedback vertex set via region
  decomposition. Theor. Comput. Sci.  \textbf{645},  25--40 (2016).
  \doi{10.1016/j.tcs.2016.05.031}

\bibitem{BondyHS87}
Bondy, J.A., Hopkins, G., Staton, W.: Lower bounds for induced forests in cubic
  graphs. Can. Math. Bull.  \textbf{30}(2),  193--199 (1987).
  \doi{10.4153/CMB-1987-028-5}

\bibitem{DemaineH05}
Demaine, E., Hajiaghayi, M.: Bidimensionality: New connections between {FPT}
  algorithms and {PTAS}s. In: Proc. 37th Annual ACM Symposium on Theory of
  Computing (STOC). pp. 590--601 (01 2005). \doi{10.1145/1070432.1070514}

\bibitem{Eppstein_2017}
Eppstein, D., Kindermann, P., Kobourov, S.G., Liotta, G., Lubiw, A., Maignan,
  A., Mondal, D., Vosoughpour, H., Whitesides, S., Wismath, S.K.: On the planar
  split thickness of graphs. Algorithmica  \textbf{80}(3),  977--994 (2018).
  \doi{10.1007/s00453-017-0328-y}

\bibitem{FariaFN98}
Faria, L., de~Figueiredo, C.M.H., de~Mendon{\c{c}}a~N., C.F.X.: The splitting
  number of the 4-cube. In: Lucchesi, C.L., Moura, A.V. (eds.) Proc. 3rd Latin
  American Symposium on Theoretical Informatics (LATIN). LNCS, vol.~1380, pp.
  141--150. Springer (1998). \doi{10.1007/BFb0054317}

\bibitem{Faria_2001}
Faria, L., de~Figueiredo, C.M.H., de~Mendon{\c{c}}a~N., C.F.X.: Splitting
  number is {NP}-complete. Discret. Appl. Math.  \textbf{108}(1),  65--83
  (2001). \doi{10.1016/S0166-218X(00)00220-1}

\bibitem{Frati22}
Frati, F.: Planar rectilinear drawings of outerplanar graphs in linear time.
  Computat. Geom.  \textbf{103},  101854 (2022).
  \doi{10.1016/j.comgeo.2021.101854}

\bibitem{f-saicrt-96}
Frederickson, G.N.: Searching among intervals and compact routing tables.
  Algorithmica  \textbf{15}(5),  448--466 (1996). \doi{10.1007/BF01955044}

\bibitem{Hartsfield86}
Hartsfield, N.: The toroidal splitting number of the complete graph
  k\({}_{\mbox{n}}\). Discret. Math.  \textbf{62}(1),  35--47 (1986).
  \doi{10.1016/0012-365X(86)90039-7}

\bibitem{Hartsfield87}
Hartsfield, N.: The splitting number of the complete graph in the projective
  plane. Graphs Comb.  \textbf{3}(1),  349--356 (1987).
  \doi{10.1007/BF01788557}

\bibitem{HartsfieldJR85}
Hartsfield, N., Jackson, B., Ringel, G.: The splitting number of the complete
  graph. Graphs Comb.  \textbf{1}(1),  311--329 (1985).
  \doi{10.1007/BF02582960}

\bibitem{HenryBF08}
Henry, N., Bezerianos, A., Fekete, J.: Improving the readability of clustered
  social networks using node duplication. IEEE Trans. Vis. Comput. Graph.
  \textbf{14}(6),  1317--1324 (2008). \doi{10.1109/TVCG.2008.141}

\bibitem{jackson1984splitting}
Jackson, B., Ringel, G.: The splitting number of complete bipartite graphs.
  Arch. Math.  \textbf{42}(2),  178--184 (1984). \doi{10.1007/BF01772941}

\bibitem{k-aog-96}
Kant, G.: Augmenting outerplanar graphs. J. Algorithms  \textbf{21}(1),  1--25
  (1996). \doi{10.1006/jagm.1996.0034}

\bibitem{kl-msfvspgglf-17}
Kelly, T., Liu, C.: Minimum size of feedback vertex sets of planar graphs of
  girth at least five. Eur. J. Comb.  \textbf{61},  138--150 (2017).
  \doi{10.1016/j.ejc.2016.10.009}

\bibitem{ku-twcg-16}
Knauer, K.B., Ueckerdt, T.: Three ways to cover a graph. Discret. Math.
  \textbf{339}(2),  745--758 (2016). \doi{10.1016/j.disc.2015.10.023}

\bibitem{LazardLL19}
Lazard, S., Lenhart, W.J., Liotta, G.: On the edge-length ratio of outerplanar
  graphs. Theor. Comput. Sci.  \textbf{770},  88--94 (2019).
  \doi{10.1016/j.tcs.2018.10.002}

\bibitem{LenhartL96prox}
Lenhart, W., Liotta, G.: Proximity drawings of outerplanar graphs. In: North,
  S.C. (ed.) Proc. 4th International Symposium on Graph Drawing (GD). LNCS,
  vol.~1190, pp. 286--302. Springer (1996). \doi{10.1007/3-540-62495-3\_55}

\bibitem{ly-nphpn-80}
Lewis, J.M., Yannakakis, M.: The node-deletion problem for hereditary
  properties is {NP}-complete. J. Comput. Syst. Sci.  \textbf{20}(2),  219--230
  (1980). \doi{10.1016/0022-0000(80)90060-4}

\bibitem{LiuZ96}
Liu, J., Zhao, C.: A new bound on the feedback vertex sets in cubic graphs.
  Discret. Math.  \textbf{148}(1-3),  119--131 (1996).
  \doi{10.1016/0012-365X(94)00268-N}

\bibitem{mz-emaog-99}
Maheshwari, A., Zeh, N.: External memory algorithms for outerplanar graphs. In:
  Aggarwal, A., Rangan, C.P. (eds.) Proc. 10th International Symposium on
  Algorithms and Computation (ISAAC). LNCS, vol.~1741, pp. 307--316. Springer
  (1999). \doi{10.1007/3-540-46632-0\_31}

\bibitem{ms-opgvd-12}
Marx, D., Schlotter, I.: Obtaining a planar graph by vertex deletion.
  Algorithmica  \textbf{62}(3-4),  807--822 (2012).
  \doi{10.1007/s00453-010-9484-z}

\bibitem{Mohar01}
Mohar, B.: Face covers and the genus problem for apex graphs. J. Comb. Theory,
  Ser. B  \textbf{82}(1),  102--117 (2001). \doi{10.1006/jctb.2000.2026}

\bibitem{nss-ccsemp-01}
Natanzon, A., Shamir, R., Sharan, R.: Complexity classification of some edge
  modification problems. Discret. Appl. Math.  \textbf{113}(1),  109--128
  (2001). \doi{10.1016/S0166-218X(00)00391-7}

\bibitem{splittingnoellenburg}
N{\"{o}}llenburg, M., Sorge, M., Terziadis, S., Villedieu, A., Wu, H., Wulms,
  J.: Planarizing graphs and their drawings by vertex splitting. In: Angelini,
  P., von Hanxleden, R. (eds.) Proc. 30th International Symposium on Graph
  Drawing and Network Visualization (GD'22). LNCS, Springer (2022),
  \url{https://arxiv.org/abs/2202.12293}

\bibitem{PaikRS98}
Paik, D., Reddy, S.M., Sahni, S.: Vertex splitting in dags and applications to
  partial scan designs and lossy circuits. Int. J. Found. Comput. Sci.
  \textbf{9}(4),  377--398 (1998). \doi{10.1142/S0129054198000301}

\bibitem{SchweitzerS10}
Schweitzer, P., Schweitzer, P.: Connecting face hitting sets in planar graphs.
  Inf. Process. Lett.  \textbf{111}(1),  11--15 (2010).
  \doi{10.1016/j.ipl.2010.10.008}

\bibitem{sst-cgmp-04}
Shamir, R., Sharan, R., Tsur, D.: Cluster graph modification problems. Discret.
  Appl. Math.  \textbf{144}(1-2),  173--182 (2004).
  \doi{10.1016/j.dam.2004.01.007}

\bibitem{th-dmig-79}
Trotter, W.T., Harary, F.: On double and multiple interval graphs. J. Graph
  Theory  \textbf{3}(3),  205--211 (1979). \doi{10.1002/jgt.3190030302}

\bibitem{UenoKG88}
Ueno, S., Kajitani, Y., Gotoh, S.: On the nonseparating independent set problem
  and feedback set problem for graphs with no vertex degree exceeding three.
  Discret. Math.  \textbf{72}(1-3),  355--360 (1988).
  \doi{10.1016/0012-365X(88)90226-9}

\bibitem{Wu_2020_TVCG}
Wu, H.Y., N{\"o}llenburg, M., Viola, I.: Multi-level area balancing of
  clustered graphs. IEEE Trans. Vis. Comput. Graph.  \textbf{28}(7),
  2682--2696 (2022). \doi{10.1109/TVCG.2020.3038154}

\bibitem{y-nenp-78}
Yannakakis, M.: Node-and edge-deletion {NP}-complete problems. In: Proc. 10th
  Annual ACM Symposium on Theory of Computing (STOC). pp. 253--264. ACM (1978).
  \doi{10.1145/800133.804355}

\end{thebibliography}

\end{document}